\journal{arxiv.org}
\newtheorem{theorem}{Theorem}
\newtheorem{lemma}{Lemma}
\newtheorem{corollary}[theorem]{Corollary}
\newdefinition{remark}{Remark}
\newdefinition{example}{Example}
\newdefinition{definition}{Definition}
\begin{document}

\begin{frontmatter}



\title{A collective of stateless automata in a $n$-dimensional environment as a distributed dynamic automaton-like object: a model and its corollaries}


\author{Oleksiy Kurganskyy}
\ead{kurgansk@gmx.de}
\address{Institute of Applied Mathematics and Mechanics,\\
National Academy of Science of Ukraine,\\
Roza Luksemburg st., 74, 83114 Donetsk, Ukraine}

\begin{abstract}
In this work a collective of interacting stateless automata in a discrete geometric $n$-dimenstional environment is considered as an integral automaton-like computational dynamic object. For such distributed on the environment object different approaches to definition of the measure of state transition are possible. We propose an approach for defining what a state is. The approach is based on the concept of relativity in Poincar\'e's interpretation.
\end{abstract}

\begin{keyword}
Collective of automata \sep cellular automata \sep finite automata \sep special relativity theory \sep Poincar\'e's relativity


\end{keyword}

\end{frontmatter}

\section{Introduction}

Currently there is a great interest in computational models consisting of underlying regular computational environments, and distributed computational structures built on them. Examples of such models are cellular automata, spacial computation and space-time crystallography~\cite{pedestrian}. For any computational model it is natural to compare functional, algorithmic and structural properties of different but related computational structures. In the finite automata theory an example of such comparison is automata homomorphism and, in particular, automata isomorphism. If we keep to the finite automata theory, a fundamental question what a state of a distributed computational structure is arises. This work is devoted to a particular solution of the issue.

The work consists of an informal presentation of the background idea of what we mean by computation of a distributed in an environment algorithmic structure, and an illustration of this idea by a simple computational model with a regular and discrete dynamics that is designed specifically for illustration purposes. The model and the problem statement are similar to the model of~\cite{pedestrian}, but differ from it in essence. One of the distinguishing features of the model is dynamics of computational structure in the environment. This produces a number of results concerning the relationship between computational and dynamic properties of these structures.

\section{Background idea}

In this work the collectives of stateless (i.e. with one state) automata interacting with an environment defined as a graph are considered. We study a collective of automata as an integral automaton-like dynamic computational object distributed in an environment. The fundamental question what is the state of such dispersed and moving on the environment object and how to measure the amount of state transitions is quite non-trivial. As opposed to the finite state automata where the measure of state transition is one state per unit of time, for a computational dynamic object distributed on the environment different approaches to definition of the measure of state transition are certainly possible. 


The idea of our approach came from the special relativity theory. It is based on the concept of relativity in Poincar\'e's interpretation~\cite{poincare}. In explanation of how to generally understand the relativity Poincar\'e begins with an example of resizing of dimensions in the Universe by the same number of times and proceeds with considering arbitrary deformations concluding that they should be unnoticed by any observer because his standards are subject to the same deformations. This reasoning coupled with the principle that the process of computation in the object is not possible without any changes in it is used in this study to define a state of collective automata. A ``change'' in an object is a change in the relative position of its ``elementary'' parts. Thus, the movement in the environment underlies the process of computation in our model.

Let us explain it by an example using pawns on a chessboard, see Fig.~\ref{shess}. The chessboard is provided with a natural reference frame. Suppose that we can move any pawn one chess square per unit of time in one of four directions: $\leftarrow$, $\uparrow$, $\rightarrow$, $\downarrow$, i.e. pawn's velocity is one chess square in a certain direction per unit of time. Let us compose a figure from the pawns, for example, an ``O''-like figure, and look at them as an integral object. Let us define the velocity of the object on the chessboard as the average velocity of its pawns. Suppose that the object is moving at maximal velocity ``one chess square per unit of time'' in a constant direction. Can the object be transformed simultaneously with the motion from ``O'' to, for example, ``T''? It is obvious that it can not.  That is, at maximum constant velocity in the example the object cannot be changed and, from our point of view, its state is invariable and it performs no computation. This point of view is formally illustrated in this work by the simplest example model of stateless automata interacting with $n$-dimensional environment. Note that the figure~\ref{shess} is not quite correct from our point of view as pawns have two states: good and bad moods. Easy to see that on the third chessboard letter ``O'' differs from the first two. We consider only the case of stateless pawns.

The introduced illustrative model is computationally universal, and collectives of automata in the environment can be seen as automaton-like computational objects. By analogy with Turing machines, which can answer certain questions about properties of words on the tapes, natural questions arise for these objects, such as what properties of the environment and other objects in it they can identify. One of the interesting questions is what can an object say about the velocity of its elementary parts (i.e. stateless automata). Can it ``perceive'' any changes in velocity of elementary parts it consists of? This question is similar to the issue in the Poincar\'e's story about relativity: can the observer see the deformation of the space, which includes the deformation of measurement standards? Having the answer ``no'' as a goal, we define our computational model. This goal determines the language (motion velocity, proper time velocity as a measure of state transition, reference frame) of interaction between collectives of automata. 

To emphasize a physical analogy in the proposed model and the problem statement we use the short word ``body'' as alias for ``collective of automata''.

This work develops~\cite{kurg2010}.

\begin{figure}
\centering
\includegraphics[scale=0.4]{./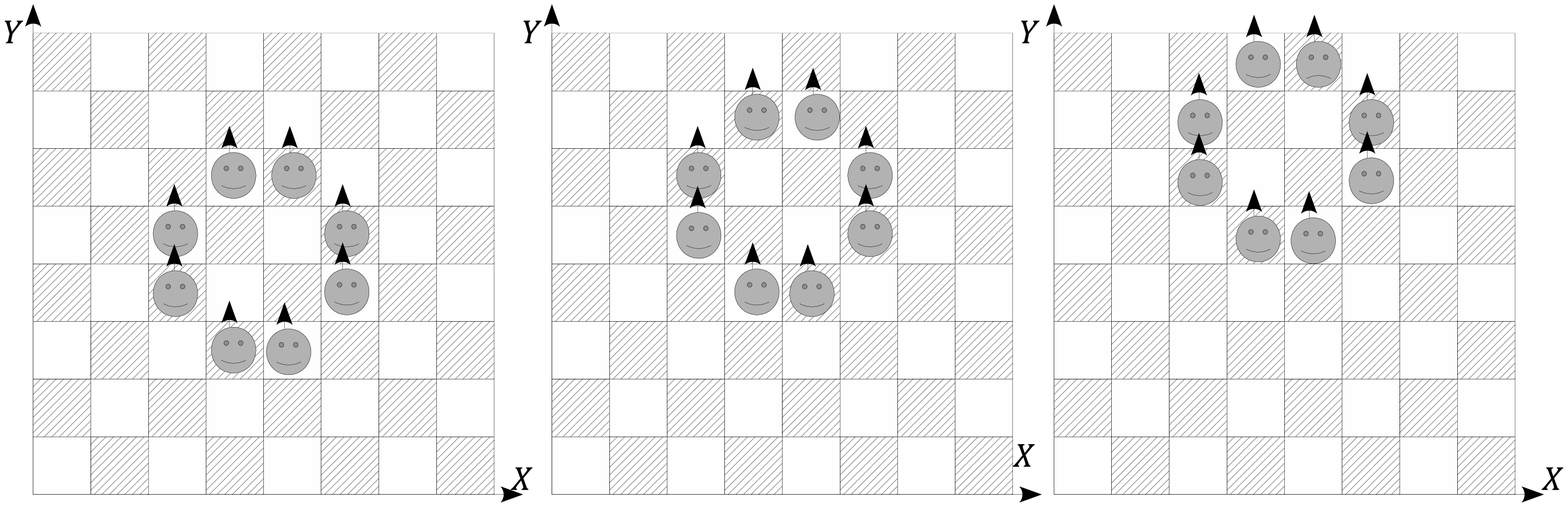}
\caption{A chessboard with pawns}
\label{shess}
\end{figure}

\section{Definitions}

The computational model used in this study consists of two main components: an underlying environment $G$ that is represented by a graph and a set of stateless automata interacting with the environment.

The environment $G$ is an infinite or finite directed graph defined as follows. Let $D$ be a finite set that we call the set of \emph{directions}. Without loss of generality, we assume that all directions are numbered by integers from $1$ to $|D|$. We associate with each arc in the graph a direction from $D$. For each arc $e=(x,y)$ there exists an arc $e'=(y,z)$ of the same direction. All arcs ending in the same vertex have different directions. All arcs that start in the same vertex have different directions. If different arcs $e$ and $e'$ end in the same vertex we will say that they are intersecting arcs. The neighborhood of an arc $e$ is understood to be the set of all $e$ intersecting arcs.

Let the graph $G$ be embedded in an $n$-dimensional affine metric space $E$ as follows. Each arc is a line segment of length $1$. Adjacent arcs of the same direction lie on the same line. Thus the set of directions $D$ takes on the meaning of the set of $n$-dimensional vectors in $E$. We call the set of vectors $D$ the set of \emph{actual spacial directions}. Let us fix the origin of the space $E$ so that it coincides with some vertex of the graph. Thus, each vertex and each point of the arcs get an $n$-dimensional coordinate in the space $E$. The space $E$ is said to be absolute, and the coordinates in it absolute spacial coordinates.

As usual, let $A=(S_A,I_A,O_A,{\delta}_A, {\lambda}_A)$ be a Mealy automaton, where $S_A$, $I_A$ and $O_A$ are the sets of states, input symbols, and output symbols, respectively, and ${\delta}_A: S_A \times I_A \rightarrow S_A$ and ${\lambda}_A: S_A \times I_A \rightarrow O_A$  are transition function and function of outputs respectively. We consider only stateless Mealy automata. The set of states of a stateless automaton consists of a single state, so there is no sense in mentioning its transition function and its set of states. Thus we write $A=(I_A,O_A,\lambda_A)$ instead of $A=(S_A,I_A,O_A,{\delta}_A, {\lambda}_A)$.  Within the framework of this article for reasons of consistency of latter definitions we name a stateless Mealy automaton an \emph{elementary body}, and we name its unique state as the \emph{internal state}. The elementary bodies will be denoted by lowercase letters, for example, $b=(I_b,O_b,\lambda_b)$. We assume also that elementary bodies are coloured in a way that isomorphic automata will have the same colour and non-isomorphic automata will have different colours. We assume that $r$ different colours numbered from $1$ to $r$ are used. Every integral moment of time $t$, that we call absolute, any elementary body $b$ is located on an arc $b(t)$ of the environment $G$. The input for an elementary body $b$ is the ordered set $\{p_{ij}\}_{1\leq i\leq |D|,1\leq j\leq r}\in I_b$ called the \emph{neighbourhood state} of the arc $b(t)$, where $p_{ij}$ are the number of all elementary bodies of the colour $j$ located on the arc of direction $i$ that ends in the same vertex as $b(t)$ at the moment of time $t$, $1\leq i\leq |D|$, $1\leq j\leq r$. The definition does not imply that the input set is finite. We have done nothing to circumvent this problem but we can simply assume that elementary bodies interact in such a way that the set of all possible input symbols can only be finite. The output of an elementary body is a direction from $D$. If the output of an elementary body $b$ at a moment of time $t$ is the direction $i$ and the arc $b(t)$ ends in the vertex $x$, then at the next moment of time the arc $b(t+1)$ starts from $x$ and has direction $i$. If the arcs $b(t)$ and $b(t+1)$ are of the same direction we say that $b$ does not change its \emph{external state} at the moment of time $t$. Otherwise we say that $b$ changes its \emph{external state}. If $b$ does not change its external state we also say that $b$ moves rectilinearly. Additionally we assume that each elementary body can not change its external state if all intersecting arcs are empty, that is, the interaction between elementary bodies occurs only by collisions in the vertices of the environment (compare with the notion of vacuum state in~\cite{pedestrian}). The elementary bodies can be seen as analogues of signals propagating in the causal network~\cite{pedestrian}. Propagation of signals in~\cite{pedestrian} depends on the functions in the nodes of a causal network, in our model it depends on the output functions $\lambda$ of elementary bodies, i.e., on the properties of ``signal''.

Let us represent discrete dynamics of elementary bodies in the graph $G$ by the continuous dynamics in $E$ as follows. Let $b$ be at an integral moment of time $t$ on the arc $b(t)=(v_0,v_1)$. Let the $n$-dimensional absolute spacial coordinates of the vertices $v_0$ and $v_1$ be $\vec{x}_0$ and $\vec{x}_1$, respectively. Then the elementary body $b$ at time $t + \lambda$, $0\leq\lambda<1$, has the absolute spacial coordinate $\vec{x}_0 + (\vec{x}_1-\vec{x}_0)\cdot\lambda$. We denote by $x_b(t)$ the absolute spacial coordinate of $b$ at time $t$.

We denote by $\tau_b=\tau_b(t)$ a measure of external state transition of $b$ until the moment of time $t$. By definition, if the elementary body $b$ moved rectilinearly from a moment of time $t_1$ to $t_2$ , then $\tau_b(t_1) = \tau_b(t_2)$. Definition of functional behavior of $\tau_b$ when $b$ moves nonrectilinearly requires additional considerations that we provide below.

We call $\tau = \tau_b(t)$ the \emph{proper time} of $b$ and $w_b(t)=\tau_b(t+1)-\tau_b(t)$ the \emph{proper time velocity} of $b$. We call $w_b(t)$ uniform proper time velocity if $w_b(t)$ is a constant. We denote by $v_b(t)=x_b(t+1)-x_b(t)$ the \emph{absolute spacial velocity} of $b$ at the moment of time $t$. We call it uniform spacial velocity if $v_b(t)$ is a constant.

We call the pair of a space coordinate $x$ and a time coordinate $t$ as (spacial-time) coordinate in the \emph{absolute reference frame} $O$. We shall call $O$ the event space as well.


In addition to absolute reference frame $O$ we introduce the notion of \emph{absolute actual reference frame} $Q$ as follows. Let $X$ be the set of spacial coordinates of all vertices of graph $G$. We construct a graph $GT$ with the set of vertices $X\times Z$ such that there exists an arc from a vertex $(x_1, t_1)$ to a vertex $(x_2, t_2)$ if and only if the arc $(x_1,x_2)$ belongs to the graph $G$ and $t_2 = t_1 +1$. Thus, the dynamics of an elementary body in the $O$ is the dynamics on the graph $GT$. Let $D$ be the set of vectors $\{\vec{1},\vec{2},\ldots,\vec{m}\}$ in the space $E$. Let us denote $e_i=(\vec{i},1)$, $\vec{i}\in D$, $1\le i\le m$. We call the ordered set $\{e_i|1\le i\le m\}$ the set of actual space-time directions, or simply the set of \emph{actual directions}.

\begin{lemma}\label{lemma:1}
Let an elementary body $b$ be in the origin of space $E$ at the time $0$. Then a space-time coordinate $(x_b(t),t)$ of $b$ at a moment of time $t$ can be represented as a linear combination of actual directions.
\end{lemma}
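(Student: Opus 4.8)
The plan is to reduce the continuous embedding to a one-step displacement relation and then telescope. First I would pin down how the absolute spacial coordinate advances over a single unit of absolute time. Let $b$ occupy the arc $b(t)=(v_0,v_1)$ with $\vec x_0,\vec x_1$ the coordinates of $v_0,v_1$; by the embedding convention $x_b(t)=\vec x_0$, and by the output rule the next arc $b(t+1)$ starts at $v_1$, so $x_b(t+1)=\vec x_1$. Hence $x_b(t+1)-x_b(t)=\vec x_1-\vec x_0$. Since every arc is a segment of length $1$ and arcs of the same direction lie on a common line, this displacement is exactly the element of $D$ carrying the direction label of $b(t)$; write it as $\vec d_t\in D$. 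Thus $x_b(t+1)=x_b(t)+\vec d_t$ with $\vec d_t\in D$.

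Next I would sum these increments. Using $x_b(0)=\vec 0$ and induction on $t$, telescoping gives $x_b(t)=\sum_{s=0}^{t-1}\vec d_s$, while the time coordinate satisfies $t=\sum_{s=0}^{t-1}1$ trivially. Pairing the spatial and temporal components,
\begin{equation}
(x_b(t),t)=\sum_{s=0}^{t-1}(\vec d_s,1)=\sum_{s=0}^{t-1}e_{i_s},
\end{equation}
where $i_s$ is the index with $\vec d_s=\vec{i_s}$, because by definition $e_i=(\vec i,1)$. Collecting equal terms yields $(x_b(t),t)=\sum_{i=1}^{m}n_i\,e_i$, in which $n_i$ counts the steps $s<t$ taken in direction $i$ and $\sum_{i=1}^m n_i=t$. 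This exhibits the space-time coordinate as a linear combination (indeed with non-negative integer coefficients) of the actual directions, which is the assertion.

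The step I expect to demand the most care is the first: confirming that the per-step displacement is the fixed vector $\vec d_t\in D$ itself, not merely a vector parallel to it. This identification relies on combining three embedding hypotheses—unit arc length, the rule that arcs of equal direction lie on one line, and the coordinate convention $x_b(t)=\vec x_0$—so that the displacement coincides unambiguously with one of the $m$ vectors of $D$ throughout the trajectory. Once this is secured, the remainder is a routine telescoping computation, and notably the argument never invokes whether $b$ moves rectilinearly or changes its external state.
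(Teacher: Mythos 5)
Your proof is correct and is exactly the formalization of the paper's one-line argument: the paper simply remarks that the claim is obvious because an elementary body moves only in actual directions in the event space, and your telescoping of the per-step displacements $(\vec d_s,1)=e_{i_s}$ spells out precisely that observation. No gaps; the identification of each unit-time displacement with an element of $D$ follows from the unit-length, common-line embedding conventions just as you describe.
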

\begin{proof}
The proof is obvious because elementary body moves only in actual directions in the event space $O$.
\end{proof}

Note that the definitions do not imply the linear independence of actual directions. But we say that the coefficients of linear combination of actual directions that forms a vector $(x_b(t),t)$ are coordinates of $(x_b(t),t)$ in the absolute actual reference frame $Q$. By definition, we assume that the dimensions of linear spaces $O$ and $Q$ are equal. 

Let us proceed to considering the collectives of elementary bodies.

\begin{definition}
A body is an arbitrary finite set of elementary bodies.
\end{definition}

According to the defintion different bodies may have common parts and one body can contain another body as a subset.
If an elementary body belongs to a body then we will consider it as an elementary part of this body. An elementary body can be an elementary part of different bodies simultaneously.

Let a body $B$ consist of $n$ elementary bodies enumerated by $\{1,2, \ldots, k\}$. Then the absolute (average) coordinate of the body $B$ at time $t$ is the value $x_B(t)=\frac{x_1(t)+ \ldots + x_k(t)}{k}$ and absolute spacial velocity of $B$ at time $t$ is the value $v_B(t)=x_B(t+1)-x_B(t)$. It follows from the definitions that the maximum modulus of spacial velocity of the bodies is $1$.

\section{External state}

A body interacting with other bodies influences them and is also under their influence. It is quite natural to describe such influences on the basis of the notion of a state of a body. Our definition of a state of a body takes into consideration the relative position of its elementary parts in the environment. The changes of relative position of elementary parts in a body can affect the whole body or its part. This motivates the question how to measure the amount of state transition. Before defining the notion of a state, that we will call the external state and that generalises the notion of the external state of  elementary body, we introduce the denotation for the measure $\tau = \tau_B(t)$ of external state transition of a body $B$ with the flow of absolute time $t$. A casual meaning of $\tau = \tau_B(t)$ is the ``age'' of the body $B$ at the moment $t$. We call $\tau = \tau_B(t)$ the proper time of $B$.

Independently from the definition of $\tau = \tau_B(t)$, we introduce the velocity $w_B(t)$ of the proper time of $B$ as $w_B(t)=\tau_B(t+1)-\tau_B(t)$. We call this value as the proper time velocity of $B$ at the moment of the absolute time $t$.

\begin{definition}
For any body $B$ $w_B(t)=0 \Leftrightarrow \forall_{b \in B}w_b(t)=0$
\end{definition}

\begin{definition}
If $w_B(t)=0$ then we say that the body $B$ does not change its external state at the moment of time $t$.
\end{definition}

It follows from this that a body $B$ does not change its external state if all its elementary bodies do not change their external states. It means that two bodies are at the same external state in the environment if one of them can be transformed into another by 
straight-line shifts on the equal number of steps applied to all its elementary parts in direction corresponding to their external states.

Note that the definition does not forbid a situation when the body has zero absolute spacial velocity and zero absolute proper time velocity simultaneously, see~\cite{kurg2010}.

\begin{theorem}
For any body $B$, if $|v_B(t)|=1$ then $w_B(t)=0$.
\end{theorem}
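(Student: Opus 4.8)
The plan is to reduce the statement about the whole body to its elementary parts by way of the triangle inequality, and then convert the resulting ``common direction'' into genuine rectilinear motion using the collision rule. First I would write out the average velocity as
$v_B(t)=x_B(t+1)-x_B(t)=\frac{1}{k}\sum_{i=1}^{k}v_i(t)$, where $v_i(t)=x_i(t+1)-x_i(t)$ is the one-step displacement of the $i$-th elementary body of $B$. Since every arc has length $1$, and at an integral time each elementary body sits at the tail of the arc it occupies while at the next integral time it sits at the tail of the following arc (the head of the current one), each $v_i(t)$ is exactly the unit direction vector of the arc $b_i(t)$; in particular $|v_i(t)|=1$.

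Next I would apply the triangle inequality,
$|v_B(t)|=\frac{1}{k}\bigl|\sum_{i=1}^{k}v_i(t)\bigr|\le\frac{1}{k}\sum_{i=1}^{k}|v_i(t)|=1$,
which incidentally re-derives the stated fact that the modulus of the spacial velocity of a body is at most $1$. The hypothesis $|v_B(t)|=1$ is precisely the equality case; because the $v_i(t)$ are unit vectors, equality in the triangle inequality forces $v_1(t)=\dots=v_k(t)=:\vec{d}$. Thus at time $t$ every elementary body of $B$ occupies an arc of one and the same actual spacial direction $\vec{d}$.

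To finish I would turn ``common direction'' into ``each part moves rectilinearly'', which is where the main obstacle lies: by definition $w_B(t)=0$ means $w_b(t)=0$ for every $b\in B$, i.e. each $b$ must move rectilinearly from $t$ to $t+1$, and a priori a shared direction at the single instant $t$ does not by itself preclude a turn at the next vertex. Here I would invoke the collision rule together with the structural property that all arcs ending in a common vertex carry pairwise different directions. Fixing $b\in B$ on its direction-$\vec{d}$ arc ending in a vertex $x$, the only arc of direction $\vec{d}$ ending in $x$ is $b(t)$ itself, while every other arc ending in $x$ has direction $\neq\vec{d}$ and is therefore unoccupied, since every elementary body lies on a direction-$\vec{d}$ arc. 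Hence all arcs intersecting $b(t)$ are empty, so by the assumption that an elementary body cannot change its external state when its intersecting arcs are empty, $b$ moves rectilinearly and $w_b(t)=0$; as this holds for every $b\in B$, the definition gives $w_B(t)=0$.

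The delicate point I would treat most carefully is that the emptiness of the intersecting arcs must hold for the ambient configuration, so the collision step tacitly presumes that no elementary body outside $B$ occupies an intersecting arc (automatic under the reading that $B$ exhausts the collective present in the environment). I expect the triangle-inequality step to be routine and the genuine content of the proof to be this last bridge from the instantaneous geometric condition to the dynamical conclusion.
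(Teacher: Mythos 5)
Your proof is correct and rests on the same underlying idea as the paper's, but the paper's proof is a single sentence (``any change of the external state of a body is not possible in case of maximal spacial velocity of all its elementary parts''), whereas you supply the two steps that sentence actually hides. The first --- the equality case of the triangle inequality forcing all the unit vectors $v_b(t)$, $b\in B$, to coincide --- is routine and is clearly what the author has in mind. The second is the genuine content you correctly isolate: with the paper's definitions, $w_B(t)=0$ means that every $b\in B$ keeps its direction in passing from the arc $b(t)$ to the arc $b(t+1)$, while $|v_B(t)|=1$ only constrains the arcs $b(t)$ themselves; the bridge requires the collision rule (an elementary body cannot change its external state if all intersecting arcs are empty) together with the structural fact that arcs ending in a common vertex have pairwise distinct directions. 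The paper never makes this step explicit, and your closing caveat is exactly right: emptiness of the intersecting arcs is guaranteed only for members of $B$, so the argument --- and indeed the theorem as literally stated --- needs the tacit assumption that no elementary body outside $B$ sits on an arc intersecting some $b(t)$, $b\in B$. A singleton body $B=\{b\}$ makes the issue vivid: $|v_B(t)|=1$ holds at every $t$, yet $b$ can turn by colliding with an elementary body not in $B$, giving $w_B(t)\neq 0$. So your proof is not only faithful to the paper's intent but more honest than the original about the hypothesis under which the statement holds.
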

\begin{proof}
The statement follows from the fact that any change of the external state of a body 
is not possible in case of maximal spacial velocity of all its elementary parts.
\end{proof}

We have defined the meaning of two bodies being in the same external state, rather than what the external state of a body in fact is. If needed the notion of external state can be generally defined as follows: since the relation ``to be in the same external state'' is an equivalence relation, the external states are equivalence classes of this relation. The same holds for the latter definition of internal state.

Definition of functional behavior of $\tau_B$ when $B$ changes its external state will be given below.

\section{Internal state}

The notion of external state of a body allows to start considering the bodies as an automata-like model of algorithms. It is natural to ask a functional equivalence of different bodies for example something like automata isomorphism in the finite automata theory. But because two bodies with different absolute spacial velocities are definitely in different external states we can not compare them functionally. For example there is no sense to ``ask'' a body to determine its absolute spacial velocity. However we would like to identify two bodies as the same algorithm even if they move with different spacial velocities. It will be achieved by introduction of affine isomorphism of bodies through definition of inertial reference frame associated with a body so that the external state of a body will be presented as a pair of components: spacial velocity of the body and its spacial velocity invariant internal state. The point of introducing the notion of inertial  reference frame associated with a body lies in the ability to consider other bodies in relation to the given one. With reference frames we attempt to develop a language of interaction between bodies just as the input and output alphabets of finite Mealy automata are for the interaction between them. The language that we develop is one of the many possible and thus our approach reflects a Poincar\'e's conventional point of view on the physical laws. An example of inertial reference frame is the absolute reference frame $O$ associated with an immovable body $B$ such that for all $t$ $x_B(t)=0$, $v_B(t)=0$, $w_B(t)=1$, $\tau_B(0)=0$, and, hence, $\tau_B(t)=t$. Thus, the introduced notions of absolute time, absolute coordinate and absolute spacial velocity implicitly mean an absolutely motionless body in relation to which objects were considered. The reference frames associated with the bodies allow us to make these notions relative.

Let us denote (for a pair of bodies $A$ and $B$) by $x_{AB}(\tau_B)$, $v_{AB}(\tau_B)$, $w_{AB}(\tau_B)$ and $\tau_{AB}(\tau_B)$ the coordinate, the spacial velocity, the proper time velocity and the proper time of the body $A$ at the moment of time $\tau_{B}$ in the reference frame $O_B$ associated with the body $B$, respectively. By definition we assume that $x_{BB}(\tau_B)\equiv 0$, $v_{BB}(\tau_B)\equiv 0$, $w_{BB}(\tau_B)\equiv 1$ and $\tau_{BB}(\tau_B)=\tau_B$. 

\begin{definition}
A body $B$ is called an inertial body if $v_B(t)$ and $w_B(t)$ are both constants.
\end{definition}

The property to be inertial implies uniform changes of not only spacial coordinates but also time coordinates. For the sake of we shall only simplicity consider the case of inertial bodies.

\begin{definition}
A reference frame associated with an inertial body will be called an inertial reference frame.
\end{definition}

The only restriction imposed on the inertial reference frames is the property that space-time coordinates of same events in different inertial reference frames are connected by affine transformation. It follows that a body that is inertial in the absolute inertial reference frame is inertial in any other inertial reference frame.

For any bodies $A$ and $B$ let us denote by $L_{BA}:O_B\rightarrow O_A$ the affine mapping that connects $O_B$ and $O_A$ such that each event $(x,\tau_B)$ in $O_B$ coincides with the event $L_{BA}(x,\tau_B)$ in $O_A$. Without loss of generality we assume that the origins of both reference frames $O_A$ and $O_B$ are the same: $x_{BA}(0)=0$ and $\tau_{BA}(0)=0$. Then the mapping $L_{BA}$ is linear.

\begin{lemma}\label{lemma_LBA}
The actual directions $\{e_i|1\le i\le m\}$ are eigenvectors of the mapping $L_{BA}$.
\end{lemma}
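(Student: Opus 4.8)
The plan is to characterise the actual directions intrinsically, as the world-lines of the motions that carry no state transition, and then to exploit the fact that this characterisation is preserved by any change of inertial reference frame. First I would observe that a single elementary body moving rectilinearly in a fixed direction $i$ has, by Lemma~\ref{lemma:1}, a world-line in the event space that is exactly the ray $\{s\,e_i : s\ge 0\}$ issuing from the origin. By the definition of proper time such a body satisfies $w_b(t)=0$, and by the earlier theorem ($|v_B(t)|=1\Rightarrow w_B(t)=0$) this is the situation of a body travelling at maximal spacial velocity. Thus each actual direction $e_i$ is realised as the world-line of a zero-proper-time (maximal velocity) motion, and conversely, again by Lemma~\ref{lemma:1}, every rectilinear motion runs along one of the $e_i$.

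The second step is to transport this picture through $L_{BA}$. Since the origins of $O_B$ and $O_A$ are identified, $L_{BA}$ is linear and sends the ray $\{s\,e_i\}$ to a straight line through the origin of $O_A$. Its image is again the world-line of an elementary body, now described in the frame $O_A$; because proper time is the intrinsic ``age'' of a body it is the same invariant quantity in every inertial frame, so a motion with $w=0$ in $O_B$ still has $w=0$ in $O_A$. Applying Lemma~\ref{lemma:1} inside $O_A$, this image world-line must itself run along an actual direction. Hence $L_{BA}(e_i)=\alpha_i\,e_{j(i)}$ for some scalar $\alpha_i$ and some index $j(i)$, i.e. $L_{BA}$ permutes the one-dimensional subspaces spanned by the actual directions and rescales them; the scalar $\alpha_i$ plays the role of a Doppler-type factor, exactly as a Lorentz boost rescales the null directions that are its eigenvectors.

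It remains to exclude a genuine permutation, that is to show $j(i)=i$, and this is the delicate point and the main obstacle. Here I would invoke that the spacial directions $\vec{i}$ are fixed data of the embedded environment $G$ rather than attributes of the moving body: a body advancing along the collinear arcs of physical direction $\vec{i}$ is the same family of events in every frame, and since the inertial frames differ only in which body is at rest and in the associated time axis while sharing the spatial structure of $G$, the body is still seen to advance along $\vec{i}$ in $O_A$. In the relativistic language motivating the model this is precisely the absence of aberration, forced by the constraint that motion follows the fixed graph; it yields $e_{j(i)}=e_i$ and hence $L_{BA}(e_i)=\alpha_i\,e_i$, so each actual direction is an eigenvector. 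The chief risk is that ``the frames share the spatial structure of $G$'' must be tied explicitly to the affine transformation connecting $O_B$ and $O_A$; if one only grants that the \emph{set} $\{e_i\}$ is preserved, one obtains the weaker statement that $L_{BA}$ is monomial in the basis of actual directions, and an additional argument (for instance a continuity argument as the relative velocity tends to zero, where $L_{BA}$ degenerates to the identity) is needed to rule out nontrivial permutations.
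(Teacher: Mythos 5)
Your argument is essentially the paper's own: the paper's proof consists precisely of the observation that the actual directions are the only physically possible motion directions of elementary bodies in the fixed embedded graph $G$, hence are frame-independent ``real'' directions rather than attributes of any one reference frame, hence are invariant under the affine map $L_{BA}$ --- which is the same physical-invariance idea you develop via world-lines of rectilinear, zero-proper-time motions. Your extra step of first deducing only that $L_{BA}$ is monomial with respect to the actual directions and then ruling out a nontrivial permutation is a refinement of a point the paper's two-sentence proof passes over silently, not a different route.
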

\begin{proof}
The directions of reference frame axes are imaginary directions in the event space. But the directions of the vectors $\{e_i|1\le i\le m\}$ in the absolute reference frame correspond to the only possible ``real'' motion directions of elementary bodies going from a graph vertex that coincide with the reference frame origin and therefore they do not depend on reference frames. It follows that the actual directions are invariant by any affine transformation of reference frames.
\end{proof}

In the following we demonstrate that the number $m$ of actual space-time directions is equal to the dimension $n+1$ of the event space $O$. Till then we choose $n+1$ linearly independent actual space-time directions and further we consider only the arcs in the graph that correspond to these directions. A fragment of an suitable $2$-dimensional environment is shown in Fig.~\ref{fig:2}.

\begin{figure}
\centering
\includegraphics[scale=0.3]{./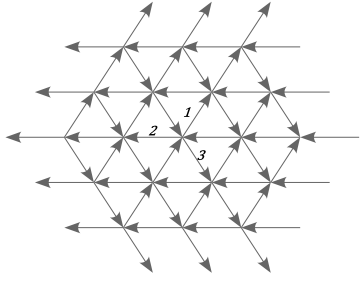}
\caption{2-dimensional environment with the actual spacial directions $D=\{\vec{1},\vec{2},\vec{3}\}$.}
\label{fig:2}
\end{figure}

Therefore, we can use the coordinates $(\lambda_1,\lambda_2,\ldots,\lambda_{n+1})$ in the reference frame $Q$ along with the coordinates $(x_1,\ldots,x_n,t)$ in the reference frame $O$, and also introduce the notion of actual reference frame $Q_A$ associated with the body $A$.

Let $M:Q\rightarrow O$ be the mapping that connects $Q$ and $O$ such that an event $(\lambda_1,\ldots,\lambda_{n+1})$ in $O$ coincides with the event $M(\lambda_1,\ldots,\lambda_{n+1})$ in $O$. Similarly, let $\Lambda_{BA}:Q_B\rightarrow Q_A$ be the mapping that connects $Q_B$ and $O_A$. From the definitions follows

\begin{theorem}\label{theorem:1}
$L_{BA}=M\cdot\Lambda_{BA}\cdot M^{-1}$.
\end{theorem}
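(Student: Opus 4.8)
The plan is to read the claimed identity as the commutativity of a square of linear maps relating the two event frames and the two actual frames of the bodies, followed by one inversion. I would place at the corners of the square the coordinate spaces $Q_B$, $O_B$, $Q_A$ and $O_A$; the two horizontal arrows are copies of the map $M$ sending actual-frame coordinates to event-frame coordinates, the left vertical arrow is $\Lambda_{BA}\colon Q_B\to Q_A$ and the right vertical arrow is $L_{BA}\colon O_B\to O_A$. In these terms the theorem is exactly the statement that the square commutes, i.e. $L_{BA}\circ M=M\circ\Lambda_{BA}$, together with the invertibility of $M$, after which composing on the right with $M^{-1}$ gives $L_{BA}=M\cdot\Lambda_{BA}\cdot M^{-1}$.

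To obtain the commutation relation I would fix an arbitrary event and follow its four coordinate representations around the square. Writing $q$ for its coordinates in $Q_B$, the point has coordinates $Mq$ in $O_B$ and $\Lambda_{BA}q$ in $Q_A$ by the defining properties of $M$ and $\Lambda_{BA}$. Its coordinates in $O_A$ can then be computed along either path: through $O_B$ they equal $L_{BA}(Mq)$ by definition of $L_{BA}$, while through $Q_A$ they equal $M(\Lambda_{BA}q)$ by definition of $M$. Since both expressions name the coordinates of one and the same event in $O_A$, they agree for every $q$, which yields $L_{BA}\circ M=M\circ\Lambda_{BA}$ as an identity of linear maps.

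The step that carries the real content, and the one I expect to be the main obstacle, is the legitimacy of labelling both horizontal arrows by the same $M$, i.e. that the passage from actual-frame to event-frame coordinates is governed by one fixed matrix in the frame of $A$ as well as in the frame of $B$. This is precisely where Lemma~\ref{lemma_LBA} enters. Taking the columns of $M$ to be the actual directions $e_1,\dots,e_{n+1}$, the lemma states that each $e_i$ is an eigenvector of $L_{BA}$, say $L_{BA}e_i=\mu_i e_i$; hence $M^{-1}L_{BA}M$ is the diagonal matrix $\mathrm{diag}(\mu_1,\dots,\mu_{n+1})$, which is exactly the representation of $L_{BA}$ in the actual-direction basis, that is, the map $\Lambda_{BA}$ acting on actual-frame coordinates. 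Thus the conjugation by $M$ does not depend on the body, and the two horizontal arrows may indeed be identified.

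Finally I would note that $M$ is invertible: we have restricted attention to $n+1$ linearly independent actual space-time directions, so its columns $e_1,\dots,e_{n+1}$ form a basis and $M^{-1}$ exists. Multiplying $L_{BA}\circ M=M\circ\Lambda_{BA}$ on the right by $M^{-1}$ then delivers $L_{BA}=M\cdot\Lambda_{BA}\cdot M^{-1}$. The only genuinely delicate point is the frame-independence of $M$ supplied by Lemma~\ref{lemma_LBA}; everything else is the formal bookkeeping of a change of basis.
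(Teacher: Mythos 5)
Your proposal is correct and takes essentially the same route the paper intends: the paper gives no argument at all beyond the phrase ``From the definitions follows,'' and your commutative-square unpacking --- with the explicit observation that Lemma~\ref{lemma_LBA} is what licenses using one and the same $M$ (columns $e_1,\dots,e_{n+1}$) on both sides of the square --- is precisely the content that justifies that claim. No gaps.
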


Because the actual directions are basis vectors in $Q$, the matrix of the transformation $\Lambda_{AB}$ is diagonal.

Now, to uniquely determine the state transition measure $\tau_{BA}$ it is sufficient to determine $M$ and $\Lambda_{BA}$.
By definition, for example, we can assume that the coordinate $(x_1,\ldots,x_n,\tau_A)=(0,0, ..., 0,1)$ in the frame $O_A$ coincides with the coordinate $(\lambda_1,\lambda_2,\ldots,\lambda_{n+1}) = (1,1, ..., 1)$ in the reference frame $Q_A$. It means that we assume $\sum_{\vec{i}\in D}\vec{i}=\vec{0}$ and redefine $e_i=(\vec{i},\frac{1}{n+1})$ in the reference frame $O$, $\vec{i}\in D$. We proceed similarly with the absolute reference frames $O$ and $Q$. Thus, the transformations $M$, $\Lambda_{BA}$ and $L_{BA}$ are uniquely defined and we have completely defined the measure of the external state transition in the absolute reference frame and in the inertial reference frames as well.

Because the transformation matrix $\Lambda_{BA}$ in the general case looks like
\begin{equation}
\left(
  \begin{array}{cccc}
    \lambda_1	& 0 				& \cdots & 0  \\
    0 				& \lambda_2	& \cdots & 0 \\
    \vdots 		& \vdots 		& \ddots & \vdots \\
    0 				& 0 				& \cdots & \lambda_{n+1} \\
  \end{array}
\right) \; ,
\end{equation}
where $\lambda_i$ can be pairwise different, then the number of eigenvectors $L_{BA}$ is exactly equal to $n+1$, and therefore the following theorem holds.

\begin{theorem}\label{theorem:2}
For the existence of affine transformations connecting inertial reference frames it is necessary that the number of actual directions is equal to $n+1$.
\end{theorem}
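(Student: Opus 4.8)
The plan is to turn the two lemmas together with Theorem~\ref{theorem:1} into a two-sided count on the number $m$ of actual directions, trapping it between $n+1$ from below and $n+1$ from above. Throughout, the ambient object is the event space $O$ of dimension $n+1$, on which $L_{BA}$ acts as a linear operator. For the lower bound I would invoke Lemma~\ref{lemma:1}: every event coordinate $(x_b(t),t)$ is a linear combination of the actual directions $\{e_i\mid 1\le i\le m\}$, so these directions span $O$. Since the paper assumes $\dim Q=\dim O=n+1$, a spanning set of an $(n+1)$-dimensional space must have at least $n+1$ elements, giving $m\ge n+1$ with at least $n+1$ of the $e_i$ linearly independent.

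For the upper bound I would use the similarity $L_{BA}=M\cdot\Lambda_{BA}\cdot M^{-1}$ from Theorem~\ref{theorem:1}, where $\Lambda_{BA}$ is diagonal with entries $\lambda_1,\ldots,\lambda_{n+1}$. Hence $L_{BA}$ is diagonalisable and its eigenvalues are exactly these diagonal entries. In the general case the $\lambda_i$ are pairwise distinct, so every eigenspace of $L_{BA}$ is one-dimensional and there are exactly $n+1$ distinct eigendirections. By Lemma~\ref{lemma_LBA} each actual direction $e_i$ is an eigenvector of $L_{BA}$ and therefore lies in one of these lines. Two distinct actual directions $e_i=(\vec{i},\frac{1}{n+1})$ and $e_j=(\vec{j},\frac{1}{n+1})$ have equal time component but $\vec{i}\ne\vec{j}$, so neither is a scalar multiple of the other; consequently each one-dimensional eigenspace can contain at most one actual direction. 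With only $n+1$ eigenspaces available this forces $m\le n+1$, and together with the lower bound we obtain $m=n+1$.

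The step I expect to be the main obstacle is justifying the ``general case'' hypothesis that $\lambda_1,\ldots,\lambda_{n+1}$ are pairwise distinct, since the upper bound collapses without it: a repeated eigenvalue yields a higher-dimensional eigenspace that could absorb several actual directions and permit $m>n+1$. The resolution is to read the necessity in a generic, nondegeneracy sense --- if the family of transformations $L_{BA}$ is to distinguish different relative motions of the bodies rather than identify them, the diagonal entries of $\Lambda_{BA}$ must be free to take pairwise different values, and it is exactly this freedom that pins the number of actual directions to $n+1$.
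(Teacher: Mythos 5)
Your proposal follows essentially the same route as the paper: the paper's own argument is precisely that $\Lambda_{BA}$ is diagonal with (generically) pairwise distinct entries $\lambda_1,\ldots,\lambda_{n+1}$, so $L_{BA}$ has exactly $n+1$ eigendirections, and since by Lemma~\ref{lemma_LBA} every actual direction must be an eigenvector, their number is pinned to $n+1$. You merely make explicit the lower bound via Lemma~\ref{lemma:1} and the stipulation $\dim Q=\dim O=n+1$ (which the paper leaves implicit in its choice of $n+1$ linearly independent actual directions), and your reading of the ``general case'' hypothesis as a genericity/nondegeneracy requirement matches the paper's intent.
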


\begin{corollary}\label{corollary:1}
For the existence of affine transformations connecting inertial reference frames it is necessary that the outdegree of each vertex in $G$ is equal to $n+1$.
\end{corollary}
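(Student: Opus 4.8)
The plan is to derive the corollary directly from Theorem~\ref{theorem:2} by identifying the number of actual directions with an outdegree that is common to every vertex of $G$. First I would record the elementary bookkeeping: each actual direction $e_i=(\vec{i},1)$ is built from a single spatial direction $\vec{i}\in D$, so the number $m$ of actual directions coincides with $|D|$. By Theorem~\ref{theorem:2}, the hypothesis that affine transformations $L_{BA}$ connect the inertial reference frames forces $m=n+1$, hence $|D|=n+1$. It then suffices to prove, under the same hypothesis, that the outdegree of every vertex equals $|D|$.

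For the upper bound I would invoke the defining property of $G$ that all arcs starting in a common vertex carry pairwise distinct directions. The map sending each outgoing arc to its direction is therefore injective into $D$, so the outdegree of any vertex is at most $|D|=n+1$.

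The lower bound carries the real content. Here I would lean on the reading of the actual directions supplied in the proof of Lemma~\ref{lemma_LBA}: the vectors $e_i$ are precisely the ``real'' motion directions available to an elementary body leaving a graph vertex, and by hypothesis there are $n+1$ of them, linearly independent. For the single, position-independent linear map $L_{BA}=M\cdot\Lambda_{BA}\cdot M^{-1}$ of Theorem~\ref{theorem:1} to act with all $n+1$ of these vectors as its simultaneous eigenbasis at every point of the environment, each of the $n+1$ directions must be realizable as an outgoing motion from every vertex; otherwise some vertex would admit strictly fewer motion directions than the globally fixed eigenbasis of $L_{BA}$ demands, and the affine connection between the frames could not be the same linear map independent of position. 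The continuation property of $G$ (every arc entering a vertex is continued by an arc of the same direction leaving it) supports this, but does not by itself guarantee all $n+1$ directions at a vertex, which is why the affine hypothesis is the operative assumption. Consequently every vertex carries an outgoing arc in each of the $n+1$ directions, giving outdegree at least $n+1$.

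Combining the two bounds yields outdegree exactly $n+1$ at every vertex. I expect the main obstacle to be the lower bound, specifically making precise why a single position-independent affine transformation between inertial frames cannot tolerate a vertex whose set of outgoing directions is a proper subset of the $n+1$ actual directions; the cleanest route is to argue that the homogeneity implicit in ``$L_{BA}$ is one linear map with a fixed $(n+1)$-element eigenbasis'' forces the local outgoing structure to be the same full set of directions at every vertex.
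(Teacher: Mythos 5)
Your derivation matches the paper's intent exactly: the corollary is stated there without any separate proof, as an immediate consequence of Theorem~\ref{theorem:2} via the identification of the common outdegree of the vertices with the number $m=|D|$ of actual directions. Your explicit handling of the lower bound --- noting that the graph axioms alone only give outdegree $\le |D|$, and that the position-independence of $L_{BA}$ (equivalently, the homogeneity of available motion directions across vertices invoked in the proof of Lemma~\ref{lemma_LBA}) is what forces equality --- fills in a step the paper leaves implicit, and is consistent with its reading of the actual directions.
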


Now we give a definition of the \emph{internal state} of a body. Let there be a bijection $\phi:A\rightarrow B$ for bodies $A$ and $B$ such that for all $b\in A$ elementary bodies $b$ and $\phi(b)$ are isomorphic (of the same ``colour''). We say that $A$ at the moment of proper time $\tau_A$ and $B$ at the moment of proper time $\tau_B$ are \emph{affine isomorphic} iff $\{(\phi(b),x_{bA}(\tau_A)|b\in A\}$=$\{(b,x_{bB}(\tau_B)|b\in B\}$. 
\begin{definition}
Two inertial bodies are in the same internal state at some moments of their proper time iff they are affine isomorphic at their respective proper time.
\end{definition}

Internal state of an inertial body does not depend on its spacial velocity in the absolute reference frame. Thus, the external state of an inertial body can be seen as a combination of two components: the spacial velocity of the body and its internal state. Because the spacial velocity of inertial bodies is constant, then by definition we can state that the measure of external state transition and the measure of internal state transition are the same.

If we now consider the body as an automata-like computational structure, whose states are defined as the internal states, the seemingly natural question whether a body can determine its own absolute velocity is by definition an algorithmically unsolvable problem or a meaningless question. If body states are by definiton the external states, then the same question has no substance, since the external state always contains information about the absolute velocity.

\subsubsection*{Final remarks.}
In this paper we have generalised the approach developed in~\cite{kurg2010} to the $n$-dimensional case, $n\ge 1$, through the introduction of actual reference frames. In~\cite{kurg2010} the case of $1$-dimensional environment with examples of affine isomorphic bodies is considered in detail, and some corollaries of the approach about the relationship between computational $w_{AB}$ and dynamic $v_{AB}$ properties of the bodies in the form of the time dilation formula, the velocity-addition formula, the length contraction/extension formula, etc. are shown.



\subsubsection*{Acknowledgments.} The author acknowledges the useful discussions on this work with Dr. Valeriy Anatoljevich Kozlovskyy, Dr. Igor Sergeevich Grunsky and Dr. Igor Potapov.


\begin{thebibliography}{4}
%
\bibitem{pedestrian} Toffoli T., A pedestrian's introduction to spacetime crystallography. IBM J. Res. Dev. 48, 1 (Jan. 2004), 13--29.
%
\bibitem{poincare} H.~Poincar\'e, Science et m\'ethode (1908).
%
\bibitem{3} I.~S.~Grunskyy, A.~N.~Kurgansky, Dynamics of collective of automata in discrete environment // Tr. Inst. Prikl. Mat. Mekh, 15, 50--56 (2007) (in Russian).
%
\bibitem{4} O.~Kurganskyy, Dynamics of a ``body'' in information environment, The 10th International Conference ``Stability, Control and Rigid Bodies Dynamics'' (ICSCD'08). - Donetsk, Ukraine, IAMM NASU, 2008, p.59. 
%
\bibitem{kurg2010} O.~Kurgansky, A state of a dynamic computational structure distributed in an environment: a model and its corollaries // eprint arXiv:1007.3836, 1-11, 2010
%
\end{thebibliography}
\end{document}